\documentclass[11pt,centertags,reqno,twoside]{amsart}
\usepackage{amsmath,latexsym, graphicx}
\usepackage[psamsfonts]{amssymb}
\usepackage{mathptmx}
\usepackage[bookmarks]{hyperref}
\usepackage[mathcal]{euscript}
\usepackage{xcolor}
\usepackage[normalem]{ulem}
\usepackage{amsfonts}
\usepackage{amssymb}
\usepackage{amsthm}
\usepackage{bm}
\usepackage{enumitem}
\usepackage[english]{babel}
\usepackage[bookmarks]{hyperref}

\numberwithin{equation}{section}

\renewcommand{\epsilon}{\varepsilon}

\newcommand{\be}{\begin{equation}}
\newcommand{\ee}{\end{equation}}

{\bf}{\it}
\newtheorem{theorem}{Theorem}[section]
\newtheorem{lemma}[theorem]{Lemma}
\newtheorem{corollary}[theorem]{Corollary}

\newtheorem{definition}[theorem]{Definition}

\newtheorem{remark}[theorem]{Remark}

\date{\today}
\begin{document}

\title[Explicit Bound States and Threshold Resonances for Two Identical Fermions...]{Explicit Bound States and Threshold Resonances for Two Identical Fermions on a One-Dimensional Lattice}

\author{S.S.Ulashov, Sh.I.Khamidov}

\address[S.S. Ulashov]{Samarkand State University, 140104, Samarkand, Uzbekistan}
\email{sobirulashov19@gmail.com}

\address[Sh.I. Khamidov]{V.I. Romanovskiy Institute of Mathematics, UzAS,Tashkent, Uzbekistan}
\email{shoh.hamidov2021@gmail.com}

\begin{abstract}
We study a two-particle lattice Schr\"odinger operator describing two
identical fermions on the one-dimensional lattice \(\mathbb{Z}\) with
a nearest-neighbor interaction of strength $\lambda\in \mathbb{R}$. The Hamiltonian acts in the
antisymmetric subspace
\(\ell^{2,\mathrm{a}}(\mathbb{Z}^{2})\). Using the direct-integral
decomposition with respect to the total quasi-momentum
\(k\in\mathbb{T}:=(-\pi,\pi]\), we reduce the spectral problem to a
family of fiber operators \(H_{\lambda}(k)\) acting in the odd
relative-coordinate space
\(\ell^{2,\mathrm{odd}}(\mathbb{Z})\). For every
\(k\in(-\pi,\pi)\), the fiber Hamiltonian is represented as a
one-dimensional discrete Schr\"odinger operator, equivalently, as a
Jacobi operator with a finite-range boundary perturbation,  and its
essential spectrum is
\(
\sigma_{\mathrm{ess}}\bigl(H_{\lambda}(k)\bigr)
=
\bigl[
4-4\cos\frac{k}{2},
\,4+4\cos\frac{k}{2}
\bigr].
\)\\
We give a complete description of the spectral transition at both
edges of the essential spectrum. For each $k\in(-\pi,\pi),$ the operator
\(H_{\lambda}(k)\) has a unique simple eigenvalue outside the essential
spectrum if and only if
\(|\lambda|>2\cos\frac{k}{2}.\)
This eigenvalue is given explicitly by
\(
E(k,\lambda)
=
4+\lambda
+
\frac{4\cos^{2}\frac{k}{2}}{\lambda},
\)
and the corresponding eigenfunction is obtained in explicit form in the
relative coordinate.
For
\(
|\lambda|<2\cos\frac{k}{2},
\)
there is no eigenvalue outside the essential spectrum.\\
The critical threshold transition is also analyzed for each $k\in(-\pi,\pi).$ At the critical coupling
\(|\lambda|=2\cos\frac{k}{2},\)
the discrete eigenvalue merges with the corresponding edge of the essential spectrum and gives rise to a threshold resonance. The corresponding threshold equation admits a nonzero bounded odd solution belonging to \(
\ell^{\infty,\mathrm{odd}}(\mathbb{Z})
\setminus
\ell^{2,\mathrm{odd}}(\mathbb{Z}),
\) showing that the spectral threshold is a resonance rather than an eigenvalue.
\\
We further establish the strong-coupling and threshold asymptotics of
the discrete spectrum. In particular,  in the strong-coupling regime the  eigenfunction
becomes localized at the interaction sites, whereas at the critical coupling it
converges pointwise to the corresponding threshold resonant solution.
\\
The exceptional fiber
\(k=\pi\), for which the relative hopping vanishes and the essential
spectrum collapses to the point \(\{4\}\), is treated separately.
\end{abstract}

\maketitle

Subject Classification: {Primary: 81Q10, Secondary: 35P20, 47N50}

Keywords: {Lattice Schr\"odinger operator, identical fermions, nearest-neighbour interaction,
bound state, threshold resonance, discrete spectrum, relative-coordinate.}

\section{Introduction}

Lattice Schr\"odinger operators form an important class of models in
mathematical physics. They describe quantum particles moving in periodic
media and arise naturally in the study of few-particle systems in optical
lattices and solid-state models. Their spectral properties depend on the
interaction between the particles, the geometry and dimension of the
lattice, and the total quasi-momentum. Even in the two-particle case, this
interplay may produce isolated eigenvalues, virtual levels, and resonance
phenomena at the edges of the essential spectrum.

In the continuous setting, threshold phenomena for few-particle
Schr\"odinger operators are closely related to scattering theory, virtual
levels, and the singular behavior of the resolvent near the spectral
thresholds; see, for example, \cite{Yaf74,FMerkuriev:1993}. Lattice systems
exhibit analogous effects, but the boundedness of the lattice dispersion
relation produces both lower and upper spectral edges. Consequently,
attractive interactions may generate bound states below the continuous
band, whereas repulsive interactions may produce isolated eigenvalues
above it.

The few-body problem on lattices was systematically studied by Mattis
\cite{Mattis:1986}. The physical relevance of these models has also been
confirmed by investigations of ultracold atoms in optical lattices. In
particular, repulsively bound atom pairs were observed experimentally in
\cite{Winkler:2006}, and two-particle states in Hubbard-type lattice models
were studied in \cite{ValientePetrosyan:2008}.

Two-particle lattice Schr\"odinger operators have been investigated
extensively with respect to bound states, virtual levels, threshold effects,
and the dependence of the discrete spectrum on the total quasi-momentum;
see
\cite{ALMM:2006,LakaevUlashov2012,BachLakaevPedra:2017,
LakaevBachPedra:2020,KholmatovLakaevAlmuratov:2020,KhKhM:2024,
MuminovKhurramov2016,MuminovKhurramovBozorov2022}.
Models with finite-range interactions, including on-site,
nearest-neighbor, and next-nearest-neighbor interactions, were considered
in
\cite{LakaevKholmatovKhamidov2021,UlashovKhamidovLakaev2024,
LakaevKhamidovAkhmadova2024,AkhmadovaAzizova2025,
LakaevLatipovaAkhmadova2025,LakaevAbdukhakimovKhasanov2026}.
These works mainly use the momentum representation, finite-rank
perturbation theory, Fredholm determinants, and partitions of the
interaction-parameter space.

The fermionic two-particle lattice Hamiltonian with a nearest-neighbor
interaction was studied in \cite{LKh2009}, where the number and location of
its discrete eigenvalues were described in terms of the interaction
strength, the total quasi-momentum, and the lattice dimension. A related
two-dimensional model with first- and second-nearest-neighbor interactions
was considered in \cite{LakaevMotovilovAbdukhakimov:2023}, where the
interaction-parameter plane was partitioned according to the number of
eigenvalues below and above the essential spectrum.

The purpose of the present paper is to give an explicit description of the
threshold spectral transition for two identical fermions on the
one-dimensional lattice with a nearest-neighbor interaction. Our approach
is based on the relative-coordinate representation. After the
direct-integral decomposition with respect to the total quasi-momentum
\(
k\in\mathbb{T}:=(-\pi,\pi],
\) the antisymmetry of the two-particle wave function becomes the oddness
condition in the relative coordinate. The corresponding fiber Hamiltonian
acts in \(\ell^{2,\mathrm{odd}}(\mathbb{Z})\) as
\[
\bigl(H_{\lambda}(k)u\bigr)(r)
=
4u(r)
-
2\cos\tfrac{k}{2}
\bigl[u(r+1)+u(r-1)\bigr]
+
\lambda v(r)u(r),
\qquad r\in\mathbb{Z},
\]
where
\[
v(r)
=
\begin{cases}
1, & |r|=1,\\
0, & |r|\neq1.
\end{cases}
\]
Thus, every nondegenerate fiber is reduced to a one-dimensional discrete
Schr\"odinger operator, equivalently, to a Jacobi operator with a
finite-range perturbation. After identifying odd sequences on
\(\mathbb{Z}\) with sequences on the half-line, the interaction becomes a
boundary perturbation at the first site.

The spectral theory of Jacobi operators is closely connected with
orthogonal-polynomial methods and provides effective tools for studying
discrete eigenvalues, resonances, and threshold asymptotics; see
\cite{Simon2005a,Simon2005b,Teschl1999}. A relative-coordinate approach to
two-particle lattice systems with finite-range interactions was developed
in \cite{Valiente2010}. Threshold virtual states for continuous
Schr\"odinger operators were studied in \cite{Yafaev1979}, while related
threshold phenomena for Jacobi operators with finite-rank perturbations
were recently investigated in \cite{LMakarov:2026}.

For
\(
k\in\mathbb{T},
\)
the essential spectrum of \(H_{\lambda}(k)\) is
\[
\sigma_{\mathrm{ess}}\bigl(H_{\lambda}(k)\bigr)
=
\left[
\varepsilon_{\min}(k),\varepsilon_{\max}(k)
\right],
\]
where
\[
\varepsilon_{\min}(k)
=
4-4\cos\frac{k}{2},
\qquad
\varepsilon_{\max}(k)
=
4+4\cos\frac{k}{2}.
\]
The spectral transition is determined by the interplay between the
interaction strength \(\lambda\) and the effective hopping amplitude
\[
2\cos\frac{k}{2}.
\]
Our main analysis concerns both spectral edges. For  \(k\in(-\pi,\pi),\) we prove that \(H_{\lambda}(k)\) has a unique simple eigenvalue outside
its essential spectrum if and only if
\(|\lambda|>2\cos\frac{k}{2}.\)
Whenever this eigenvalue exists, it is given explicitly by
\[
E(k,\lambda)
=
4+\lambda
+
\frac{4\cos^{2}\frac{k}{2}}{\lambda},
\]
and the corresponding eigenfunction is obtained in explicit form in the
relative coordinate. If
\(|\lambda|<
2\cos\frac{k}{2},\)
then \(H_{\lambda}(k)\) has no eigenvalue outside \(\sigma_{\mathrm{ess}}\bigl(H_{\lambda}(k)\bigr).\)

The present analysis is not limited to deriving explicit formulas
for the eigenvalue and eigenfunction. It also completely characterizes
the critical threshold behavior of the fiber Hamiltonian for every $k\in(-\pi,\pi).$
At the critical couplings
\(
|\lambda|
=
2\cos\frac{k}{2},
\)
the discrete eigenvalue merges with the corresponding edge of the essential
spectrum, namely,
\[
E_{\mathrm{thr}}^{\pm}(k)=4\pm4\cos\frac{k}{2}
\]
The threshold equation
\[
H_{\lambda}(k)u
=
E_{\mathrm{thr}}^{\pm}(k)u
\]
has a nonzero bounded odd solution given by
\[
u_{\mathrm{res}}(r)
=
\begin{cases}
C\,(-1)^{|r|-1}\operatorname{sgn}(r), & \lambda=2\cos\frac{k}{2},\\[1mm]
C\,\operatorname{sgn}(r), & \lambda=-2\cos\frac{k}{2}.
\end{cases}
\]
Therefore, \(
u_{\mathrm{res}}\in
\ell^{\infty,\mathrm{odd}}(\mathbb Z)
\setminus
\ell^{2,\mathrm{odd}}(\mathbb Z),
\)
and the corresponding spectral edge is a threshold resonance rather
than an eigenvalue.
Thus, the critical couplings separate the regime without a
bound state from the regimes in which a unique simple
eigenvalue exists below or above the essential spectrum.

In addition, explicit strong-coupling and threshold asymptotics are obtained for the discrete eigenvalue and the corresponding eigenfunction. In the strong-coupling regime, the discrete eigenvalue admits a first-order asymptotic expansion, and the corresponding eigenfunction becomes localized at the interaction sites. As the coupling approaches its critical value, the discrete eigenvalue admits quadratic threshold asymptotics, while the corresponding eigenfunction converges pointwise to the threshold resonant solution.

The fiber \(k=\pi\) is exceptional because
\(
\cos\frac{\pi}{2}=0,
\)
and hence the hopping term vanishes. The essential spectral band therefore
collapses to the point \(\{4\}\). For every \(\lambda\neq0\), this fiber has
the simple eigenvalue
\[
E(\pi,\lambda)
=
4+\lambda,
\]
which lies below \(4\) for \(\lambda<0\) and above \(4\) for
\(\lambda>0\).

Thus, in contrast to earlier results concerned mainly with counting and
locating discrete eigenvalues, the present analysis gives an explicit
description of the threshold transition. In particular, it not only determines
the exact quasi-momentum-dependent critical coupling, derives explicit formulas
for the eigenvalue and eigenfunction, but also proves that the critical
coupling is characterized by a threshold resonance, which marks the transition
from the absence of a bound state to the existence of a unique bound state, and establishes the corresponding strong-coupling and threshold asymptotics.

The paper is organized as follows. In Section~\ref{sec2}, we define the two-particle fermionic Hamiltonian in position space. In Section~\ref{sec3}, we derive its direct-integral decomposition and obtain the fiber Hamiltonians in the relative-coordinate representation. In Section~\ref{sec4}, we study the discrete spectrum, establish the existence, uniqueness, and absence of bound states, characterize the corresponding threshold resonances, and derive the strong-coupling and threshold asymptotics of the discrete eigenvalue and the corresponding eigenfunction.

\section{The fermion-fermion lattice Hamiltonian }\label{sec2}

\subsection{Two-particle Hamiltonian in position space}
Let $\mathbb{Z}$ be the one-dimensional lattice and let $\mathbb{Z}^2:=\mathbb{Z}\times\mathbb{Z}$ be its Cartesian square.
Let $\ell^{2}(\mathbb{Z}^2)$ be a Hilbert
space of square-summable functions defined on
$\mathbb{Z}^2$ and $\ell^{2,a}(\mathbb{Z}^2)\subset\ell
^{2}(\mathbb{Z}^2)$ be the subspace  of antisymmetric functions.

We consider the free two-particle Hamiltonian associated with the fermion--fermion system acting on $\ell^{2,a}(\mathbb{Z}^2)$,
\begin{equation}\label{two-body-free-ff}
(\widehat{h}_{0}\widehat{\psi})(x_1,x_2)
=
\sum_{|s|\le1}\widehat{\varepsilon}(s)\Bigl[\widehat{\psi}(x_1+s,x_2)+\widehat{\psi}(x_1,x_2+s)\Bigr],
\end{equation}
where the function $\widehat{\varepsilon}(s)$ is given by
\begin{equation}\label{epsilon}
\widehat{\varepsilon}(s) =
\left\{\!\!\!\!
\begin{array}{rl}
2,          & \text{if } s = 0, \\
-1, & \text{if } |s| = 1, \\
0,          & \text{if } |s| > 1.
\end{array}
\right.
\end{equation}
The Hamiltonian corresponding to a nearest-neighbour interaction of strength $\lambda\in \mathbb{R}$ is given by
\begin{equation}\label{hmyuff}
\widehat{h}_{\lambda} = \widehat{h}_0 + \lambda \widehat{V},
\end{equation}
where the interaction operator $\widehat{V}$ acts as a multiplication operator:
\begin{equation}\label{vpotff}
(\widehat{V}\widehat{\psi})(x_1,x_2)
= v(x_1-x_2)\,\widehat{\psi}(x_1,x_2)
\end{equation}
with
\begin{equation}\label{vpotff1}
v(x)=
\begin{cases}
1,& |x|=1,\\
0,& \text{otherwise}.
\end{cases}
\end{equation}

\section{Fiber decomposition of the fermion-fermion Hamiltonian}\label{sec3}
In this section, we use translational invariance to reduce the fermion-fermion Hamiltonian to a direct integral of fiber operators. After passing to relative coordinates, applying the  Fourier transform, and performing a phase transformation, we obtain fiber operators acting in the odd subspace, parametrized by the total quasi-momentum $k \in \mathbb{T}.$

\subsection{Translation Invariance and Relative Coordinates}
For each $t\in\mathbb{Z}$, define the diagonal  translation operator
\[T_t:\ell^{2,a}(\mathbb{Z}^2)\to\ell^{2,a}(\mathbb{Z}^2)\]
by
\begin{equation}\label{translation}
(T_t\widehat{\psi})(x_1,x_2)=\widehat{\psi}(x_1+t,x_2+t),  \quad (x_1,x_2)\in\mathbb{Z}^2.
\end{equation}
\begin{lemma}\label{unitaryT}
The family $\{T_t\}_{t \in \mathbb{Z}}$ is a unitary representation of  $\mathbb{Z}$ on $\ell^{2,a}(\mathbb{Z}^2)$, and commutes with $\widehat {h}_{\lambda},$ that is, \[ T_t \widehat {h}_{\lambda} = \widehat{h}_{\lambda} T_t, \quad t \in \mathbb{Z}. \]
\end{lemma}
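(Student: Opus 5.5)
The plan is to verify the claims in three steps, each by direct computation from the definitions \eqref{translation}, \eqref{two-body-free-ff} and \eqref{vpotff}.

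\textbf{Step 1: $T_t$ is a unitary on $\ell^{2,a}(\mathbb{Z}^2)$.} First, $T_t$ maps $\ell^{2,a}(\mathbb{Z}^2)$ into itself. If $\widehat\psi(x_2,x_1)=-\widehat\psi(x_1,x_2)$, then
\[
(T_t\widehat\psi)(x_2,x_1)=\widehat\psi(x_2+t,x_1+t)=-\widehat\psi(x_1+t,x_2+t)=-(T_t\widehat\psi)(x_1,x_2).
\]
Next, the map $(x_1,x_2)\mapsto(x_1+t,x_2+t)$ is a bijection of $\mathbb{Z}^2$. Reindexing the sum therefore gives $\|T_t\widehat\psi\|=\|\widehat\psi\|$, so $T_t$ is an isometry.

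\textbf{Step 2: the group law.} I would check directly that
\[
(T_sT_t\widehat\psi)(x_1,x_2)=(T_t\widehat\psi)(x_1+s,x_2+s)=\widehat\psi(x_1+s+t,x_2+s+t),
\]
so $T_sT_t=T_{s+t}$, and clearly $T_0=I$. Hence $T_{-t}$ is a two-sided inverse of $T_t$. An invertible isometry is unitary, with $T_t^*=T_{-t}$. This gives a unitary representation of the additive group $\mathbb{Z}$.

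\textbf{Step 3: commutation with $\widehat h_\lambda$.} Since $\widehat h_\lambda=\widehat h_0+\lambda\widehat V$, it suffices to treat each summand separately.

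For the free part, $\widehat h_0$ is a finite linear combination of the coordinate shifts $S_1^s\widehat\psi(x_1,x_2)=\widehat\psi(x_1+s,x_2)$ and $S_2^s\widehat\psi(x_1,x_2)=\widehat\psi(x_1,x_2+s)$. Each of these commutes with the diagonal shift, because both sides of $T_tS_1^s\widehat\psi=S_1^sT_t\widehat\psi$ evaluate to $\widehat\psi(x_1+s+t,x_2+t)$, and similarly for $S_2^s$. Alternatively, one can substitute directly into \eqref{two-body-free-ff}.

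For the interaction, the potential depends only on $x_1-x_2$, and this difference is invariant under the diagonal shift. Therefore
\[
(T_t\widehat V\widehat\psi)(x_1,x_2)=v\bigl((x_1+t)-(x_2+t)\bigr)\,\widehat\psi(x_1+t,x_2+t)=v(x_1-x_2)\,(T_t\widehat\psi)(x_1,x_2)=(\widehat V T_t\widehat\psi)(x_1,x_2).
\]

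\textbf{Domain remark.} All the operators involved are bounded: $\widehat h_0$ is a finite sum of shifts and $v$ is bounded. So the identities above hold on all of $\ell^{2,a}(\mathbb{Z}^2)$, with no domain issues.

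There is no real obstacle. The only point needing care is that $T_t$ preserves antisymmetry, which Step 1 verifies, so that $T_t$ is genuinely an operator on the fermionic subspace and not just on $\ell^2(\mathbb{Z}^2)$.
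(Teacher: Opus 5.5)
Your proposal is correct and follows essentially the same route as the paper: the group law and $T_t^{-1}=T_{-t}$, an isometry by reindexing the sum, preservation of antisymmetry, commutation of $\widehat h_0$ with diagonal shifts, and invariance of $x_1-x_2$ for $\widehat V$. You simply write out the individual verifications, such as the explicit computation with the coordinate shifts $S_1^s, S_2^s$, which the paper only states.
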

\begin{proof}
The identities
\[
T_tT_s=T_{t+s},
\qquad
T_0=I,
\qquad
T_t^{-1}=T_{-t},
\]
follow directly from the definition. A change of summation variables shows
that
\[
\|T_t\widehat\psi\|=\|\widehat\psi\|.
\]
Since \(T_t^{-1}=T_{-t}\), the isometry \(T_t\) is onto and therefore unitary. Diagonal translations preserve
antisymmetry because the same shift is applied to both particle
coordinates.

The free operator \(\widehat h_0\) consists of lattice shifts in the particle coordinates, which commute with diagonal translations. Hence
\(
T_t\widehat h_0=\widehat h_0T_t.
\)
The interaction is multiplication by
\(v(x_1-x_2)\), and
\[
(x_1+t)-(x_2+t)=x_1-x_2.
\]
Hence \(T_t\widehat V=\widehat VT_t\), which proves
\[
T_t\widehat h_\lambda
=
\widehat h_\lambda T_t.
\]
\end{proof}

We introduce new lattice coordinates
\[
R := x_2, \qquad r := x_1 - x_2 ,
\] where \(R\)
is the position of the second particle on the lattice and
\(r\) is a \textbf{\textit{relative coordinate}} (the difference between the particle coordinates).
This change of variables defines a unitary operator
\(U_0:\ell^{2}(\mathbb{Z}^2)\to\ell^{2}(\mathbb{Z}^2)\) by
\begin{equation}\label{unol}
(U_0\widehat{\psi})(R,r)=\widehat{\psi}(R+r,R).
\end{equation}
Its inverse operator is given by
\begin{equation}\label{unolinverse}
(U_0^{-1}\widehat{\varphi})(x_1,x_2)=\widehat{\varphi}(x_2, x_1-x_2).
\end{equation}
In these coordinates, diagonal translations act only on \(R\):
\begin{equation}
\label{unoltranslation}
(U_0T_tU_0^{-1}\widehat\varphi)(R,r)
=
\widehat\varphi(R+t,r).
\end{equation}
Indeed, the diagonal shift changes \(R\) to \(R+t\), whereas the
difference \(r=x_1-x_2\) remains unchanged.

\subsection{Fourier Transform and the Odd Subspace}

Let
\[
\mathcal F_{R}:
\ell^{2}(\mathbb{Z}^{2})
\longrightarrow
L^{2}\bigl(\mathbb{T};\ell^{2}(\mathbb{Z})\bigr), \quad \mathbb{T}=(-\pi,\pi]
\]
be the partial Fourier transform with respect to the variable \(R\), defined by
\begin{equation}\label{partialfourierop}
(\mathcal{F}_{R}\widehat{\varphi})(k,r)=\frac{1}{\sqrt{2\pi}}\sum_{R\in\mathbb{Z}}e^{-i kR}\widehat{\varphi}(R,r),
\qquad k\in\mathbb{T},
\end{equation}
where
$k\in\mathbb{T}$ denotes    {\it  the total quasi-momentum} of the two-particle system. For the diagonal translations \eqref{unoltranslation}
the partial Fourier transform $\mathcal{F}_{R}$ transforms them into multiplication operators
\begin{equation}\label{partialfourier}
\Bigl(\mathcal{F}_{R}(U_0T_tU_0^{-1})\widehat{\varphi}\Bigr)(k,r)=e^{ikt}(\mathcal{F}_{R}\widehat{\varphi})(k,r).
\end{equation}
We define
\[
G:
L^{2}\bigl(\mathbb T;\ell^{2}(\mathbb Z)\bigr)
\longrightarrow
L^{2}\bigl(\mathbb T;\ell^{2}(\mathbb Z)\bigr)
\]
by
\begin{equation}\label{Gaugoperator}
(Gw)(k,r)=e^{-ikr/2}w(k,r).
\end{equation}
Equivalently,
\[
G=\int_{\mathbb T}^{\oplus}G_{k}\,\,\mathrm{d}k,
\]
where $G_{k}$ is a {\it phase transformation} operator on $\ell^{2}(\mathbb{Z})$  and defined by
$$(G_{k}u)(r)=e^{-ikr/2}u(r).$$
\begin{lemma}\label{unitary-odd}
Let \(U:=G\mathcal F_{R}U_{0}.\)
Then the operator \(U\) maps the antisymmetric subspace onto the space of odd functions
in the relative variable \(r\), that is
\[
U\ell^{2,\mathrm a}(\mathbb Z^{2})
=
L^{2}\bigl(\mathbb T;\ell^{2,\mathrm{odd}}(\mathbb Z)\bigr).
\]
Moreover, the restriction
\[
U:
\ell^{2,\mathrm a}(\mathbb Z^{2})
\longrightarrow
L^{2}\bigl(\mathbb T;\ell^{2,\mathrm{odd}}(\mathbb Z)\bigr)
\]
is unitary.
\end{lemma}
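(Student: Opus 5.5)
The plan is to exploit the fact that $U=G\mathcal F_RU_0$ is a composition of three unitary operators on the full spaces. $U$ is therefore already unitary from $\ell^{2}(\mathbb Z^{2})$ onto $L^{2}\bigl(\mathbb T;\ell^{2}(\mathbb Z)\bigr)$. It then suffices to show that $U$ maps the closed subspace $\ell^{2,\mathrm a}(\mathbb Z^2)$ exactly onto the closed subspace $L^{2}\bigl(\mathbb T;\ell^{2,\mathrm{odd}}(\mathbb Z)\bigr)$. The restriction of a unitary to a subspace, taken onto its image, is again unitary, so both claims follow once the images are identified.

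First I record unitarity of the factors. $U_0$ is a bijective relabeling of $\mathbb Z^2$ (inverse \eqref{unolinverse}), hence unitary. $\mathcal F_R$ is unitary by Parseval in the $R$-variable applied fiberwise in $r$, combined with Fubini. $G$ is multiplication by the unimodular function $e^{-ikr/2}$, where $k$ is taken in the fixed representative interval $(-\pi,\pi]$ so the half-angle phase is well defined; hence $G$ is unitary with inverse multiplication by $e^{ikr/2}$.

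Second, I translate antisymmetry through each factor. Let $\widehat\varphi=U_0\widehat\psi$, so $\widehat\varphi(R,r)=\widehat\psi(R+r,R)$. Then $\widehat\psi(x_1,x_2)=-\widehat\psi(x_2,x_1)$ is equivalent to
\[
\widehat\varphi(R,r)=-\widehat\varphi(R+r,-r)\qquad\text{for all }(R,r).
\]
Applying $\mathcal F_R$ and shifting the summation index $R\mapsto R-r$ gives the equivalent condition
\[
w(k,r)=-e^{ikr}w(k,-r)\quad\text{for a.e. }k,\qquad w=\mathcal F_R\widehat\varphi .
\]
Multiplying by $e^{-ikr/2}$ turns this into
\[
(Gw)(k,r)=-e^{ikr/2}w(k,-r)=-(Gw)(k,-r),
\]
that is, oddness in $r$ for a.e.\ $k$.

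The step needing the most care is showing that each of these conditions is \emph{equivalent} to the next, not merely implied by it; this equivalence gives surjectivity. For the Fourier step I would argue through the unitary involution $S$ defined by $(S\widehat\varphi)(R,r)=-\widehat\varphi(R+r,-r)$. Its conjugate $\mathcal F_RS\mathcal F_R^{-1}$ is the operator $w(k,r)\mapsto-e^{ikr}w(k,-r)$. Hence the fixed-point space of $S$ is mapped by $\mathcal F_R$ exactly onto the fixed-point space of the conjugated involution, and $G$ maps that space exactly onto the odd functions. Reading the chain backwards then shows that every odd $L^2$ function is the image of some antisymmetric $\widehat\psi$. This gives
\[
U\ell^{2,\mathrm a}(\mathbb Z^{2})=L^{2}\bigl(\mathbb T;\ell^{2,\mathrm{odd}}(\mathbb Z)\bigr),
\]
and unitarity of the restriction follows.
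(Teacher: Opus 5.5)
Your proposal is correct and takes essentially the same route as the paper: unitarity of the three factors, then the chain antisymmetry $\Leftrightarrow$ $\widehat\varphi(R,r)=-\widehat\varphi(R+r,-r)$ $\Leftrightarrow$ $w(k,r)=-e^{ikr}w(k,-r)$ $\Leftrightarrow$ oddness of $Gw$ in $r$, read backwards for surjectivity. Your device of conjugating the involution $S$ through $\mathcal F_R$ and then $G$, which turns it into $u(k,r)\mapsto -u(k,-r)$, is a clean way to make rigorous the paper's one-line converse, which says that applying $G^{-1}$, $\mathcal F_R^{-1}$, $U_0^{-1}$ in reverse order yields an antisymmetric function.
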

 \begin{proof}
Since  \(U_0\), \(\mathcal F_R\), and \(G\) are unitary, the operator \(U=G\mathcal F_{R}U_{0}\) is unitary.

Let
\[
\widehat{\psi}\in\ell^{2,\mathrm a}(\mathbb Z^{2}),
\qquad
\widehat{\varphi}=U_{0}\widehat{\psi}.
\]
The antisymmetry of \(\widehat{\psi}\) is equivalent to
\begin{equation}
\label{phiodd}
\widehat\varphi(R,r)
=
-\widehat\varphi(R+r,-r).
\end{equation}
Applying the Fourier transform in the variable \(R\) , then \eqref{phiodd} gives
\begin{equation}
\label{wfunctionodd}
w(k,r)
=
-e^{ikr}w(k,-r).
\end{equation}
Setting \(u=Gw\), we obtain
\[
u(k,-r)
=
e^{ikr/2}w(k,-r)
=
-e^{-ikr/2}w(k,r)
=
-u(k,r).
\] Thus \(u(k,\cdot)\) is odd for almost every \(k.\)
Therefore
\[
U\ell^{2,\mathrm a}(\mathbb Z^{2})
\subset
L^{2}\bigl(\mathbb T;\ell^{2,\mathrm{odd}}(\mathbb Z)\bigr).
\]

Conversely, starting with an odd function \(u(k,\cdot)\) and applying
\(G^{-1}\), \(\mathcal F_R^{-1}\), and \(U_0^{-1}\) in reverse order
yields an antisymmetric function in
\(\ell^{2,\mathrm a}(\mathbb Z^{2})\). Thus,
\[
U\ell^{2,\mathrm a}(\mathbb Z^{2})
=
L^{2}\bigl(\mathbb T;\ell^{2,\mathrm{odd}}(\mathbb Z)\bigr).
\]
Since \(U\) is unitary on \(\ell^{2}(\mathbb Z^{2})\), its restriction to
\(\ell^{2,\mathrm a}(\mathbb Z^{2})\) is unitary.
\end{proof}

\subsection{Direct-Integral Decomposition}
\begin{lemma}\label{directintegral}
Let \(U:=G\mathcal F_{R}U_{0}.\) The  Hamiltonian \(\widehat{h}_{\lambda}\) admits the direct-integral decomposition
\[
U\widehat{h}_{\lambda}U^{-1}
=
\int_{\mathbb T}^{\oplus} H_{\lambda}(k)\,\mathrm{d}k
\]
in the Hilbert space
\(
L^{2}\bigl(\mathbb T;\ell^{2,\mathrm{odd}}(\mathbb Z)\bigr),
\)
where
\[
H_{\lambda}(k):
\ell^{2,\mathrm{odd}}(\mathbb{Z})
\to
\ell^{2,\mathrm{odd}}(\mathbb{Z})
\]
is given by
\[\big(H_{\lambda}(k)u\big)(r)=4u(r)-2\cos\frac{k}{2}\,\big[u(r+1)+u(r-1)\big]+\lambda v(r)u(r).\]
\end{lemma}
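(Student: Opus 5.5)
The plan is to conjugate the two pieces of \(\widehat h_\lambda=\widehat h_0+\lambda\widehat V\) separately through the three unitaries \(U_0\), \(\mathcal F_R\), \(G\). By Lemma~\ref{unitaryT} and \eqref{partialfourier}, \(\widehat h_\lambda\) commutes with the diagonal translations, which become multiplication by \(e^{ikt}\) after \(\mathcal F_R U_0\). Hence the conjugated operator is decomposable in \(k\), and it suffices to compute its action fiberwise. By Lemma~\ref{unitary-odd}, the antisymmetric subspace is mapped onto \(L^2(\mathbb T;\ell^{2,\mathrm{odd}}(\mathbb Z))\), so the fibers automatically act in \(\ell^{2,\mathrm{odd}}(\mathbb Z)\).

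\emph{Step 1 (relative coordinates).} We compute \(U_0\widehat h_0U_0^{-1}\). Using \(\widehat\psi(x_1,x_2)=\widehat\varphi(x_2,x_1-x_2)\), a shift \(x_1\mapsto x_1\pm1\) becomes \(r\mapsto r\pm1\) with \(R\) fixed. A shift \(x_2\mapsto x_2\pm1\) becomes \((R,r)\mapsto(R\pm1,r\mp1)\). Hence
\[
(U_0\widehat h_0U_0^{-1}\widehat\varphi)(R,r)=4\widehat\varphi(R,r)-\widehat\varphi(R,r+1)-\widehat\varphi(R,r-1)-\widehat\varphi(R+1,r-1)-\widehat\varphi(R-1,r+1).
\]
The interaction becomes multiplication by \(v(r)\), since \(x_1-x_2=r\).

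\emph{Step 2 (Fourier transform in \(R\)).} With the convention \eqref{partialfourierop}, a shift \(R\mapsto R\pm1\) becomes multiplication by \(e^{\pm ik}\). So on the fiber \(w(k,\cdot)\) we get
\[
4w(r)-(1+e^{ik})w(r-1)-(1+e^{-ik})w(r+1)+\lambda v(r)w(r).
\]

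\emph{Step 3 (phase transformation).} Substituting \(w(r)=e^{ikr/2}u(r)\) and multiplying by \(e^{-ikr/2}\), the coefficient of \(u(r-1)\) becomes
\[
(1+e^{ik})e^{-ik/2}=2\cos\tfrac k2,
\]
and similarly for \(u(r+1)\). The potential term is unaffected because \(G\) is a multiplication operator. This yields the formula for \(H_\lambda(k)\).

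The direct-integral statement then follows: \(U\widehat h_\lambda U^{-1}\) acts as multiplication by the operator-valued function \(k\mapsto H_\lambda(k)\). This function is bounded, with norm at most \(8+|\lambda|\), and measurable (indeed continuous) in \(k\). It preserves oddness, since the symmetric hopping and the even \(v\) map odd sequences to odd ones.

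The main obstacle is bookkeeping. One must verify the sign conventions, namely that \(\mathcal F_R\) with kernel \(e^{-ikR}\) sends \(R\mapsto R+1\) to \(e^{+ik}\). One must also check that the phase \(e^{-ikr/2}\) exactly symmetrizes both hopping coefficients, including at \(k=\pi\), where the coefficient vanishes. A convention mismatch would merely replace \(e^{ik}\) by \(e^{-ik}\); this case is handled by the same computation and gives the same real coefficient \(2\cos\frac k2\).
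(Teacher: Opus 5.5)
Your proof is correct and follows essentially the same route as the paper: commutation with diagonal translations gives decomposability, and the fiber is then computed through relative coordinates, the partial Fourier transform in $R$, and the phase $e^{-ikr/2}$. Your Step~1 also derives the intermediate fiber \eqref{tildah}, which the paper states without proof, and your explicit computation already exhibits $U\widehat h_\lambda U^{-1}$ as fiberwise multiplication, so the abstract decomposability criterion becomes unnecessary.

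One small caveat: your closing remark about a sign-convention mismatch is only right if the phase in $G$ is flipped as well. With kernel $e^{+ikR}$ and the paper's $G$, the coefficient of $u(r-1)$ would be $(1+e^{-ik})e^{-ik/2}=2\cos\frac{k}{2}\,e^{-ik}$, which is not real.
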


\begin{proof}
By Lemma \ref{unitary-odd}, the operator
\(
U=G\mathcal F_R U_0
\)
is unitary from
\(
\ell^{2,\mathrm a}(\mathbb{Z}^{2})
\)
onto
\(
L^{2}\bigl(\mathbb{T};\ell^{2,\mathrm{odd}}(\mathbb{Z})\bigr).
\)
Hence it is enough to study the transformed operator
\(U\widehat h_{\lambda}U^{-1}\) in the Hilbert space
 \(
L^{2}\bigl(\mathbb{T};\ell^{2,\mathrm{odd}}(\mathbb{Z})\bigr).
\)
By Lemma \ref{unitaryT}, the Hamiltonian $\widehat{h}_{\lambda}$ commutes with the diagonal translations:
\[
T_{t}\widehat{h}_{\lambda}
=
\widehat{h}_{\lambda}T_{t},
\qquad
t\in\mathbb Z.
\]
In the \((R,r)\)-coordinates using \eqref{unoltranslation}, and  applying \(\mathcal F_{R}\), we get
\[
\mathcal F_{R}U_{0}T_{t}U_{0}^{-1}\mathcal F_{R}^{-1}
=
M_{e^{ikt}},
\]
where
\[
(M_{e^{ikt}}w)(k,r)=e^{ikt}w(k,r).
\]
Since \(G\) commutes with \(M_{e^{ikt}}\), we also have
\[
UT_{t}U^{-1}=M_{e^{ikt}}.
\]
Therefore,
\[
M_{e^{ikt}}(U\widehat{h}_{\lambda}U^{-1})
=
(U\widehat{h}_{\lambda}U^{-1})M_{e^{ikt}},
\qquad
t\in\mathbb Z.
\]
Since the family $\{M_{e^{ikt}}\}_{t\in\mathbb{Z}}$ generates the algebra of multiplication operators on
\(L^{2}\bigl(\mathbb{T};\ell^{2,\mathrm{odd}}(\mathbb{Z})\bigr),\)
the standard decomposability criterion(see \cite{RSIV}, Theorem XIII.85) implies that
\[
U\widehat{h}_{\lambda\,}U^{-1}=\int_{\mathbb{T}}^{\oplus} H_{\lambda}(k)\,\mathrm{d}k,
\]
for a measurable family of bounded operators
\(
H_{\lambda}(k)
\)
acting in
\(\ell^{2,\mathrm{odd}}(\mathbb{Z}).\)

We now compute the fiber operator \(H_{\lambda}(k)\).

Let $U_1:=\mathcal F_{R}U_{0}.$ Before applying \(G\), the operator \(U_1\widehat{h}_{\lambda}U_{1}^{-1}\) has the fiber form
\begin{equation}\label{tildah}
\big(\widetilde{H}_{\lambda}(k)w\big)(r)
=4w(r)-\big[(1+e^{-ik})w(r+1)+(1+e^{ik})w(r-1)\big]+\lambda v(r)w(r).
\end{equation}
Since
\(U=GU_{1},\) the fiber operator after the phase transformation is
\[
H_{\lambda}(k)
=
G_{k}\widetilde{H}_{\lambda}(k)G_{k}^{-1}
\big|_{\ell^{2,\mathrm{odd}}(\mathbb{Z})}.
\]
Let
\(
w=G_{k}^{-1}u.
\)
Then
\(
w(r)=e^{ikr/2}u(r).
\)
Therefore
\[\big(H_{\lambda}(k)u\big)(r)=e^{-ikr/2}\left(\widetilde{H}_{\lambda}(k)w\right)(r).\]
 Using  \eqref{tildah} we get
\[
\begin{aligned}
\big(H_{\lambda}(k)u\big)(r)
&=
e^{-ikr/2}\big(\widetilde{H}_{\lambda}(k)w\big)(r) \\
&=
4u(r)
-
(1+e^{-ik})e^{ik/2}u(r+1)
-
(1+e^{ik})e^{-ik/2}u(r-1)
+
\lambda v(r)u(r).
\end{aligned}
\]
Since
\[
(1+e^{-ik})e^{ik/2}
=
(1+e^{ik})e^{-ik/2}
=
2\cos\frac{k}{2},
\]
we obtain
\[
(H_{\lambda}(k)u)(r)
=
4u(r)
-
2\cos\frac{k}{2}\Big[u(r+1)+u(r-1)\Big]
+
\lambda v(r)u(r).
\]
Since \(v\) is even, this operator preserves the odd subspace. Hence
\[
H_{\lambda}(k):
\ell^{2,\mathrm{odd}}(\mathbb{Z})
\longrightarrow
\ell^{2,\mathrm{odd}}(\mathbb{Z})
\]
is well-defined, and
\[
U\widehat{h}_{\lambda}U^{-1}
=
\int_{\mathbb T}^{\oplus}H_{\lambda}(k)\,\mathrm{d}k.
\]
\end{proof}

\section{Main spectral result for the fermion-fermion fiber operator}\label{sec4}
In this section, we study the spectral properties of the fermion-fermion
fiber operators \(H_\lambda(k)\), \(k\in\mathbb T\). We first describe their
essential spectrum and then study isolated eigenvalues below it.

By the direct-integral representation obtained in Lemma \ref{directintegral}, the spectral analysis of
\(\widehat h_{\lambda}\) is reduced to the study of the fiber operators
\[
H_{\lambda}(k):
\ell^{2,\mathrm{odd}}(\mathbb{Z})
\to
\ell^{2,\mathrm{odd}}(\mathbb{Z}),\qquad k\in\mathbb T,
\]
defined by
\[
H_{\lambda}(k)=H_{0}(k)+\lambda V,
\]
where
\begin{align*}
(H_0(k)u)(r)&=4u(r)-2\cos\frac{k}{2}\,[u(r+1)+u(r-1)],
\quad u\in \ell^{2,\mathrm{odd}}(\mathbb{Z}),
\intertext{and}
(Vu)(r)&=v(r)u(r),
\quad u\in \ell^{2,\mathrm{odd}}(\mathbb{Z}).
\end{align*}
Here the function  \(v(\cdot)\) is defined in \eqref{vpotff1}.

\subsection{Essential spectrum of the fiber operator \(H_{\lambda}(k)\)}
Let
\[
L^{2,\mathrm{odd}}(\mathbb{T}) = \{ f \in L^{2}(\mathbb{T}) : f(-p) = -f(p) \}.
\]
We describe the essential spectrum of the fiber operator \( H_{\lambda}(k) \).
\begin{lemma}\label{essentialspectr}
For every fixed \( k\in\mathbb{T} \), the essential spectrum of the operator \(H_{\lambda}(k)\)
is given by
\[
\sigma_{\mathrm{ess}}(H_{\lambda}(k))=[\varepsilon_{\min}(k), \varepsilon_{\max}(k)].
\]
where
\[\varepsilon_{\min}(k)=4-4\cos \frac{k}{2} \quad\text{and}\quad   \varepsilon_{\max}(k)=4+4\cos \frac{k}{2}.\]
\end{lemma}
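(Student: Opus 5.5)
The plan is to treat $\lambda V$ as a compact (indeed finite-rank) perturbation of $H_0(k)$ and then invoke Weyl's theorem. This reduces the lemma to computing the spectrum of the free fiber operator $H_0(k)$ on $\ell^{2,\mathrm{odd}}(\mathbb Z)$.

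\emph{First step.} Note that $V$ is multiplication by $v$, which is supported on $\{r:|r|=1\}$. Restricted to odd sequences, $V$ is therefore the orthogonal projection onto the one-dimensional span of $\delta_1-\delta_{-1}$, so it has rank one. Weyl's theorem on the stability of the essential spectrum under compact perturbations then gives
\[
\sigma_{\mathrm{ess}}(H_\lambda(k))=\sigma_{\mathrm{ess}}(H_0(k)).
\]

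\emph{Second step.} To compute $\sigma(H_0(k))$, I would use the Fourier transform $u\mapsto \hat u(p)=(2\pi)^{-1/2}\sum_r e^{-ipr}u(r)$. This is a unitary map from $\ell^{2,\mathrm{odd}}(\mathbb Z)$ onto $L^{2,\mathrm{odd}}(\mathbb T)$, which is why that space was introduced just before the lemma. Under it, $H_0(k)$ becomes multiplication by the even function
\[
\varepsilon_k(p)=4-4\cos\tfrac k2\cos p .
\]
Multiplication by an even function preserves $L^{2,\mathrm{odd}}(\mathbb T)$, so the transformed operator is well defined there. The spectrum of a multiplication operator is the essential range of its symbol, and this remains true on the odd subspace. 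Indeed, if $E=\varepsilon_k(p_0)$, one can build odd Weyl sequences $f_n=\chi_{(p_0-1/n,p_0+1/n)}-\chi_{(-p_0-1/n,-p_0+1/n)}$, normalized, whenever $p_0\neq0,\pi$. The values at $p_0\in\{0,\pi\}$ are then limits of such values.

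\emph{Third step.} For $k\in(-\pi,\pi)$ one has $\cos\frac k2>0$, so the range of $\varepsilon_k$ is $[4-4\cos\frac k2,\,4+4\cos\frac k2]$. For $k\neq\pi$ the symbol is nonconstant and real-analytic, so every level set has measure zero. Hence $H_0(k)$ has no eigenvalues, its spectrum is purely continuous, and it coincides with the essential spectrum. For $k=\pi$, $H_0(\pi)=4I$ on an infinite-dimensional space, so $4$ is an eigenvalue of infinite multiplicity and $\sigma_{\mathrm{ess}}=\{4\}=[\varepsilon_{\min}(\pi),\varepsilon_{\max}(\pi)]$. Combining with the first step gives the claim.

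\emph{Main obstacle.} The only delicate point is the oddness restriction: I must check that restricting to $L^{2,\mathrm{odd}}$ does not shrink the spectrum, particularly near the band edges $p=0,\pi$, where odd functions vanish. The odd Weyl sequences above handle this, supported near $\pm p_0$ with $p_0$ close to (but not equal to) the edge. Closedness of the spectrum then recovers the endpoints.
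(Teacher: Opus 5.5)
Your proof is correct and follows the same route as the paper. Like the paper, you Fourier transform $H_0(k)$ to multiplication by $\varepsilon_k(p)=4-4\cos\frac k2\cos p$ on $L^{2,\mathrm{odd}}(\mathbb T)$ and then apply Weyl's theorem to the finite-rank perturbation $\lambda V$; your odd Weyl sequences and your separate treatment of $k=\pi$ (where $H_0(\pi)=4I$ has an infinitely degenerate eigenvalue) fill in details that the paper skips when it reads the spectrum off the minimum and maximum of the symbol.
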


\begin{proof}
Applying the Fourier transform
\[
(\mathcal Fu)(p)=\frac{1}{\sqrt{2\pi}}\sum_{r\in\mathbb{Z}}u(r)e^{-ipr},\qquad p\in\mathbb{T},
\]
for the operator \( H_{0}(k)\),
we obtain
\[
[\mathcal F H_{0}(k)\mathcal F^{-1}f](p)=\varepsilon_k(p)f(p),
\]
where
\[
\varepsilon_k(p)=4-4\cos \frac{k}{2}\cos p.
\]
Hence, \(H_{0}(k)\) is unitarily equivalent to the multiplication operator by \(\varepsilon_k(p) \) on \( L^{2,\mathrm{odd}}(\mathbb{T}) \). Since
\[\min\limits_{p\in\mathbb{T}}\varepsilon_k(p)=\varepsilon_{\min}(k), \quad  \max\limits_{p\in\mathbb{T}}\varepsilon_k(p)=\varepsilon_{\max}(k),\]
we conclude that
\[
\sigma(H_{0}(k))=\sigma_{\mathrm{ess}}(H_{0}(k))=[\varepsilon_{\min}(k),\varepsilon_{\max}(k)]=\bigl[4-4\cos \frac{k}{2},\;4+4\cos \frac{k}{2}\bigr].
\]
Since $v(\cdot)$ has finite support, the operator $V$ is finite rank
and Weyl's theorem on invariance of the essential spectrum under compact perturbations yields
\[
\sigma_{\mathrm{ess}}(H_{\lambda}(k))=\sigma_{\mathrm{ess}}(H_{0}(k))=[\varepsilon_{\min}(k), \varepsilon_{\max}(k)].
\]
\end{proof}

\subsection{Discrete spectrum outside the essential spectrum}
In this section, we study the discrete spectrum of the fiber operators
\(H_\lambda(k)\) outside the essential spectrum. We obtain sharp conditions
for the existence and absence of isolated eigenvalues.

We now state the  result on the discrete spectrum outside the essential
spectrum.

\begin{theorem}\label{maintheorem}
The following assertions hold:
\begin{enumerate}
\item[(i)] Let $k\in(-\pi,\pi)$ and $|\lambda|>2\cos\frac{k}{2}$. Then the operator
$H_{\lambda}(k)$ has a unique simple eigenvalue
outside the essential spectrum. This eigenvalue is given by
\begin{equation}\label{eigenformula}
E(k,\lambda)
=4+\lambda+\frac{4\cos^2\frac{k}{2}}{\lambda}.
\end{equation}
This eigenvalue lies below the essential spectrum when
\(
\lambda<-2\cos\frac{k}{2},
\)
and above the essential spectrum when
\(
\lambda>2\cos\frac{k}{2}.
\)
Moreover, the function $\lambda\mapsto E(k,\lambda)$ is
strictly increasing on \(\left(-\infty,-2\cos\frac{k}{2}\right)\) and
\(\left(2\cos\frac{k}{2},+\infty\right).\)
A corresponding eigenfunction has the form
\begin{equation}\label{efunctionff}
u(r)=C\,\alpha^{|r|-1}\operatorname{sgn}(r), \qquad C\neq0 \qquad r\in\mathbb{Z},
\end{equation}
where
\begin{equation}\label{alpharelation}
\alpha=-\frac{2\cos\frac{k}{2}}{\lambda}, \qquad 0<|\alpha|<1.
\end{equation}

\item[(ii)]Let $k=\pi$ and $\lambda\neq0$. Then the operator
$H_{\lambda}(\pi)$ has a unique simple eigenvalue
outside the essential spectrum. This eigenvalue is given by
\begin{equation}\label{eigenff-pi}
E(\pi,\lambda)
=4+\lambda.
\end{equation}
The corresponding eigenfunction has the form
\[
u(r)=
\begin{cases}
C, & r=1,\\[0.4ex]
-C, & r=-1,\\[0.4ex]
0, & |r|\neq 1,
\end{cases}
\qquad C\neq 0.
\]

\item[(iii)] Let $k\in(-\pi,\pi)$ and  $|\lambda|<2\cos\frac{k}{2}.$ Then the operator
$H_{\lambda}(k)$ has no eigenvalues outside the essential spectrum.
\end{enumerate}
\end{theorem}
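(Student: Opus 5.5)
The plan is to reduce the eigenvalue problem on $\ell^{2,\mathrm{odd}}(\mathbb Z)$ to a problem on the half-line $r\ge1$, and then solve that half-line problem exactly. Write $t:=2\cos\frac{k}{2}$; this is positive for $k\in(-\pi,\pi)$. An odd $u$ satisfies $u(0)=0$ and is determined by its values $u(r)$, $r\ge1$. Using $u(0)=0$, the equation $H_\lambda(k)u=Eu$ becomes the following pair of equations:
\[
(4+\lambda-E)u(1)-t\,u(2)=0,\qquad (4-E)u(r)-t\bigl[u(r+1)+u(r-1)\bigr]=0,\quad r\ge2 .
\]
The second equation is a constant-coefficient recurrence. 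Its characteristic equation is $t(\alpha+\alpha^{-1})=4-E$. For $E\notin[4-2t,4+2t]$ this equation has two real roots $\alpha,\alpha^{-1}$ with $0<|\alpha|<1$. Square summability then forces $u(r)=C\alpha^{r-1}$ for $r\ge1$. Here one also has to check that the recurrence for $r\ge2$ is consistent with starting at $r=1$: the general solution is a combination of $\alpha^{r}$ and $\alpha^{-r}$ on $r\ge1$, and the $\ell^2$ condition kills the $\alpha^{-r}$ component. Hence the eigenspace is at most one-dimensional, which gives simplicity and uniqueness of the eigenfunction up to the constant $C$.

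Next I will substitute this form into the boundary equation at $r=1$. This gives $4+\lambda-E=t\alpha$. Combining it with $4-E=t(\alpha+\alpha^{-1})$ yields $\lambda=-t\alpha^{-1}$, so $\alpha=-t/\lambda$. The eigenvalue is then $E=4-t\alpha-t\alpha^{-1}=4+\frac{t^2}{\lambda}+\lambda$, which is \eqref{eigenformula}. The requirement $|\alpha|<1$ holds exactly when $|\lambda|>t$.

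This yields (i) and (iii) together. If $|\lambda|>t$, the explicit $u$ in \eqref{efunctionff} lies in $\ell^2$, is odd, and satisfies both equations, so $E(k,\lambda)$ is an eigenvalue. If $|\lambda|<t$, any eigenvalue outside the band would force $|\alpha|=t/|\lambda|>1$, a contradiction. Eigenvalues inside the band are irrelevant to the claim.

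Two further facts remain, and they are elementary calculus.
\begin{itemize}
\item \emph{Location.} For $\lambda>t$ one has $\lambda+t^2/\lambda>2t$, so $E>4+2t$ and the eigenvalue lies above the essential spectrum. The case $\lambda<-t$ is symmetric and puts $E$ below it.
\item \emph{Monotonicity.} $\partial_\lambda E=1-t^2/\lambda^2>0$ for $|\lambda|>t$, so $E$ is strictly increasing on each of the two intervals.
\end{itemize}

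For (ii), $t=0$, so $H_\lambda(\pi)$ is the multiplication operator $4+\lambda v(r)$ restricted to the odd functions. Its spectrum is $\{4\}$, of infinite multiplicity, together with $4+\lambda$. The eigenspace for $4+\lambda$ consists of the odd functions supported on $\{\pm1\}$, which is one-dimensional. Hence $4+\lambda$ is a simple isolated eigenvalue whenever $\lambda\ne0$.

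The main point needing care is the completeness argument in (iii) and in uniqueness. One must justify that every $\ell^2$ solution of the recurrence has the pure decaying form, and this relies on the roots being real and distinct, with neither of modulus one, outside the closed band. That holds because for real $x$ with $|x|>2$ the equation $\alpha+\alpha^{-1}=x$ has real roots $\alpha$ and $\alpha^{-1}$ with $|\alpha|\neq1$.
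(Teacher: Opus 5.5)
Your proposal is correct and follows essentially the same route as the paper. Both reduce to the half-line problem with $u(0)=0$, select the decaying root $\alpha$ of $t(\alpha+\alpha^{-1})=4-E$ by square summability, and use the boundary equation at $r=1$ to force $\lambda=-t\alpha^{-1}$, from which (i), uniqueness, simplicity and (iii) all follow; that relation also rules out $\lambda=0$ automatically, a case the paper treats separately. The remaining differences are cosmetic: you locate $E(k,\lambda)$ via $\lambda+t^{2}/\lambda>2t$ for $\lambda>t$ (resp.\ $<-2t$ for $\lambda<-t$) rather than via the sign of $\alpha+\alpha^{-1}$, and you read off the case $k=\pi$ from the multiplication-operator form rather than solving the reduced system.
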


\begin{proof}
Let $E\in\mathbb{R}$ and $u\neq0, u\in\ell^{2,\mathrm{odd}}(\mathbb{Z})$ satisfy the eigenvalue equation
\begin{equation}\label{eigenequation}
H_{\lambda}(k)u=Eu.
\end{equation}
The eigenvalue equation outside the support of the interaction has the form
\[
E u(r) = 4 u(r) - 2\cos\frac{k}{2} \left[ u(r+1) + u(r-1) \right], \quad |r| \geq 2.
\]
At the interaction sites $r = \pm 1$, it is given by
\[
E u(1) = 4 u(1) - 2\cos\frac{k}{2} \left[ u(2) + u(0) \right] + \lambda u(1).
\]
Since $u\in\ell^{2,\mathrm{odd}}(\mathbb Z)$, we have $u(-r)=-u(r)$ for all $r\in\mathbb Z$. In particular, this implies that $u(0)=0$. Therefore the eigenvalue problem is reduced to the following boundary value problem on the half-line $r\ge 1$:
\begin{equation}\label{eigensystem1}
\begin{cases}
E\,u(r)=4 u(r) - 2\cos\frac{k}{2}\, \bigl[ u(r+1) + u(r-1) \bigr], & r\ge 2,\\
E\,u(1)= 4 u(1) - 2\cos\frac{k}{2}\, u(2) + \lambda u(1),& r=1,
\end{cases}
\end{equation}
with \(u(0)=0,\,\, u\in\ell^2(\mathbb{N}),\) where $\mathbb{N}:=\{1,2,3,...\}$.

{\bf (i)} Let $k\in(-\pi,\pi).$ We introduce
\[
a(k) := 2\cos\frac{k}{2}.
\]
Since $k\in(-\pi,\pi)$, we have \(a(k)>0.\) Moreover,
\[
\sigma_{\mathrm{ess}}\bigl(H_{\lambda}(k)\bigr) = [4-2a(k),\, 4+2a(k)],
\]
and  an eigenvalue outside the essential spectrum must satisfy
\[
\left|\frac{4 - E}{a(k)}\right|>2.
\]
 For $r \geq 2$, we solve the second-order difference equation
\[
E u(r) = 4u(r) - a(k)\bigl(u(r+1) + u(r-1)\bigr).
\]
Looking for solutions in the form
\[
u(r) = \beta^r,
\]
we obtain
\begin{equation}\label{chareq}
E = 4 - a(k)\bigl(\beta + \beta^{-1}\bigr).
\end{equation}
This implies the following quadratic equation:
\begin{equation}\label{quadraticeq}
\beta^2-\frac{4-E}{a(k)}\beta+1=0.
\end{equation}
Since
\[
\left|\frac{4 - E}{a(k)}\right|>2,\]
 the characteristic equation \eqref{quadraticeq} has two distinct real reciprocal roots. Let $\alpha$ denote the unique root satisfying
$$0<|\alpha|<1.$$ Hence
\[
\beta_{1} =\alpha^{-1}, \qquad \beta_{2} = \alpha,
\]
where \[\alpha:=\dfrac{\frac{4 - E}{a(k)}-\sqrt{\left[\frac{4 - E}{a(k)}\right]^2-4}}{2},\quad 0 < |\alpha| < 1.\]
Consequently, every solution of  the second-order difference equation on the half-line is of the form
\[
u(r) = C_1 \alpha^r + C_2 \alpha^{-r}, \qquad r \geq 1.
\]
Since $0<|\alpha|<1,$ we have
\[
|\alpha^{-r}| \to \infty \quad \text{as } r \to \infty.
\]
Therefore the condition \(u \in \ell^2(\mathbb{N})\) implies
\(C_2 = 0.\) Thus every eigenfunction corresponding to an eigenvalue outside the essential spectrum is necessarily of the form
\begin{equation}\label{gensolution}
u(r)=C_1\alpha^r,
\quad r\geq 1,
\quad C_1\neq 0,
\quad 0<|\alpha|<1.
\end{equation}
Equivalently, after changing the nonzero constant, this solution can be written in the shifted form
\begin{equation}\label{gensolution2}
u(r)=C\alpha^{r-1},
\quad r\geq 1,
\quad C\neq 0,
\quad 0<|\alpha|<1.
\end{equation}
The shifted form is more convenient because
\[u(1)=C,\qquad u(2)=C\alpha\]
Substituting $u(r)=C\alpha^{r-1}$ into the second equation of \eqref{eigensystem1} and since \( C \neq 0 \), we obtain
\[
E = 4 - a(k)\alpha + \lambda.
\]
On the other hand, from the characteristic relation \eqref{chareq},
\begin{equation}\label{eigenvalue}
E=4-a(k)\left(\alpha+\alpha^{-1}\right).
\end{equation}
Comparing these two expressions for $E$, we get
\[
4-a(k)\alpha+\lambda = 4-a(k)\alpha-a(k)\alpha^{-1}.
\]
Hence
\begin{equation*}
\lambda=-a(k)\alpha^{-1},
\end{equation*}
and therefore
\[
\alpha=-\frac{a(k)}{\lambda} = -\frac{2\cos\frac{k}{2}}{\lambda}.
\]
Since $0<|\alpha|<1$, this is equivalent to
\[
|\lambda|>a(k) = 2\cos\frac{k}{2}.
\]
Thus an eigenvalue outside the essential spectrum can exist only if
\[
|\lambda|>2\cos\frac{k}{2}.
\]
Substituting the value of $\alpha$  into  \eqref{eigenvalue}, we obtain the corresponding eigenvalue
\begin{equation}\label{eq:Eff}
E(k,\lambda):=E
= 4 - 2\alpha\cos\tfrac{k}{2} + \lambda
= 4 + \lambda + \frac{4\cos^2\frac{k}{2}}{\lambda}.
\end{equation}
Since $0<|\alpha|<1$, the function
\[
u(r)=C\alpha^{|r|-1} \operatorname{sgn}(r), \,\, r\in\mathbb Z,\ C\neq 0,
\]
belongs to $\ell^{2,\mathrm{odd}}(\mathbb Z)$.
Moreover, by construction, it satisfies \eqref{eigensystem1} with
\(
E(k,\lambda).
\)
Hence $u$ is a corresponding eigenfunction.

We now show that $E(k,\lambda)$ lies outside the essential spectrum.
Using
\[
\alpha = -\frac{a(k)}{\lambda}, \qquad 0 < |\alpha| < 1,
\]
we have
\[
\alpha+\alpha^{-1}>2
\quad\text{for }\quad \lambda<0,
\]
and
\[
\alpha+\alpha^{-1}<-2
\quad\text{for }\quad\lambda>0.
\]
On the other hand, from \eqref{eigenvalue},
\[
E(k,\lambda) = 4 - a(k)(\alpha + \alpha^{-1}).
\]
Therefore
\[
E(k,\lambda)<4-2a(k)=\varepsilon_{\min}(k)=\inf\sigma_{\mathrm{ess}}\bigl(H_{\lambda}(k)\bigr), \quad\text{for }\lambda<0,
\]
and
\[
E(k,\lambda)
>
4+2a(k)
=\varepsilon_{\max}(k)=
\sup\sigma_{\mathrm{ess}}\bigl(H_{\lambda}(k)\bigr), \quad\text{for } \lambda>0.
\]

The above argument also proves uniqueness. Indeed, any eigenvalue outside the essential spectrum gives an $\ell^2$-solution of the form
\[
u(r) = C \alpha^{r-1}, \qquad r \geq 1, \quad 0 < |\alpha| < 1.
\]
The second equation of \eqref{eigensystem1} uniquely determines
\[
\alpha = -\frac{2\cos\frac{k}{2}}{\lambda}.
\]
Hence $E(k,\lambda)$ is also uniquely determined by
\[
E(k,\lambda) = 4 - a(k)(\alpha + \alpha^{-1}).
\]
Therefore $H_{\lambda}(k)$ has exactly one eigenvalue outside the essential spectrum.

Furthermore, every eigenfunction corresponding to this eigenvalue is a scalar multiple of
\[
\alpha^{|r|-1} \operatorname{sgn}(r).
\]
Hence
\[
\dim \ker \bigl( H_{\lambda}(k) - E(k,\lambda) \bigr) = 1.
\]
Since $H_{\lambda}(k)$ is self-adjoint, this isolated eigenvalue is simple.

For \(|\lambda|>2\cos\frac{k}{2},\) one has
\[
\partial_{\lambda}E(k,\lambda)=1-\frac{4\cos^2\frac{k}{2}}{\lambda^2}>0.
\]
Hence the function \( \lambda\mapsto E(k,\lambda)\) is strictly increasing on \((-\infty,-2\cos\frac{k}{2})\) and \((2\cos\frac{k}{2},+\infty)\)

{\bf{(ii)}} Let $k=\pi$. Then
\[
a(k)=2\cos\frac{\pi}{2}= 0,
\]
and the system \eqref{eigensystem1} becomes
\begin{equation}\label{eigensystem2}
\begin{cases}
E\,u(r)=4 u(r) & r\ge 2,\\
E\,u(1)= (4+\lambda) u(1),
\end{cases}
\end{equation}
By Lemma \ref{essentialspectr},
\[
\sigma_{\mathrm{ess}}(H_{\lambda}(\pi))=\{4\}.
\]
Let $u \neq 0$  be an eigenfunction corresponding to an eigenvalue $E\notin\sigma_{\mathrm{ess}}(H_{\lambda}(\pi)).$
Since $E\neq4,$  the first equation of the system \eqref{eigensystem2} yields
\[
(E-4)u(r)=0,\quad |r|\geq 2,
\]
and therefore
\(u(r)=0,\,\,\, |r|\geq 2.\) If \(u(1)=0\), then  \[u(-1)=-u(1)=0,\] and hence \(u(r)=0\) contradicting the assumption that \(u\neq0\). Therefore \(u(1)\neq0\). The second equation of \eqref{eigensystem2}  gives
\[
E(\pi,\lambda):=E=4+\lambda.
\]
Since $\lambda\neq0$,
$E\notin\sigma_{\mathrm{ess}}(H_{\lambda}(\pi)).$
The corresponding eigenfunction is
\[
u(r)=
\begin{cases}
C, & r=1,\\[0.4ex]
-C, & r=-1,\\[0.4ex]
0, & |r|\neq 1,
\end{cases}
\qquad C\neq 0.
\]
The corresponding eigenspace is one-dimensional. Hence the eigenvalue is simple.

{\bf (iii)} Assume
\(
|\lambda| < 2\cos \frac{k}{2}.
\)

If \(\lambda = 0\), then
\[
H_\lambda(k) = H_0(k),
\]
and therefore
\[
\sigma(H_0(k)) = \sigma_{\mathrm{ess}}(H_0(k)),
\]
so no eigenvalue exists outside the essential spectrum.

Now let
\(
0 < |\lambda| < 2\cos \frac{k}{2}.
\)
We show that in this case the operator has no eigenvalues below the essential spectrum.

Suppose, to the contrary, that such an eigenvalue exists. Then, by the argument used in the proof of assertion (i), the corresponding solution must be of the form
$$
u(r)=C\alpha^{|r|-1}\operatorname{sgn}(r),\qquad C\neq 0,
$$
where
$0<|\alpha|<1.$
Moreover, the second equation of \eqref{eigensystem1} gives
\[
\alpha = -\frac{2\cos\frac{k}{2}}{\lambda}.
\]
Since
$
0<|\lambda|<2\cos\frac{k}{2},
$
we obtain
$$
|\alpha|=\frac{2\cos\frac{k}{2}}{|\lambda|}>1.
$$
This contradicts the necessary condition $0<|\alpha|<1$. Hence no eigenvalue outside the essential spectrum exists for
$|\lambda|<2\cos\frac{k}{2}$.
\end{proof}

\begin{corollary}[\bf Strong-coupling asymptotics]
  Let \(k\in(-\pi,\pi).\)
As \(|\lambda|\to\infty\),
\[
E(k,\lambda)
=
\lambda+4+O\bigl(|\lambda|^{-1}\bigr),
\]
and
\[
\alpha
=
-\frac{2\cos\frac{k}{2}}{\lambda}
\longrightarrow0.
\]
Consequently, the normalized eigenfunction converges in
\(\ell^{2,\mathrm{odd}}(\mathbb Z)\) to
\[
\frac{1}{\sqrt{2}}
\bigl(\delta_{1}-\delta_{-1}\bigr).
\]
Thus, in the strong-coupling limit, the bound state becomes localized at the interaction sites \(r=\pm 1\).
\end{corollary}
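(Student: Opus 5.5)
The corollary follows directly from the explicit formulas in Theorem~\ref{maintheorem}(i). Fix \(k\in(-\pi,\pi)\), so that \(a(k)=2\cos\frac{k}{2}>0\). For \(|\lambda|>a(k)\), formula \eqref{eigenformula} gives exactly
\[
E(k,\lambda)-\lambda-4=\frac{4\cos^{2}\frac{k}{2}}{\lambda}.
\]
The absolute value of the right-hand side is at most \(4/|\lambda|\), uniformly in \(k\). This is the \(O(|\lambda|^{-1})\) remainder. Likewise, \eqref{alpharelation} gives \(|\alpha|=a(k)/|\lambda|\le 2/|\lambda|\to0\).

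For the eigenfunction, I will use \eqref{efunctionff} and compute the norm explicitly with the geometric series:
\[
\Bigl\|\alpha^{|r|-1}\operatorname{sgn}(r)\Bigr\|^{2}
=2\sum_{r\ge1}\alpha^{2(r-1)}=\frac{2}{1-\alpha^{2}}.
\]
The normalized eigenfunction, with the sign fixed by \(u(1)>0\), is then
\[
u_\lambda(r)=\sqrt{\tfrac{1-\alpha^{2}}{2}}\,\alpha^{|r|-1}\operatorname{sgn}(r).
\]
The target is \(\phi=\frac{1}{\sqrt2}(\delta_1-\delta_{-1})\). Its values at \(r=\pm1\) are \(\pm\frac{1}{\sqrt2}\), and the values of \(u_\lambda\) there are \(\pm\sqrt{(1-\alpha^2)/2}\). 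Hence
\[
\|u_\lambda-\phi\|^{2}
=2\cdot\tfrac12\bigl(1-\sqrt{1-\alpha^{2}}\bigr)^{2}
+(1-\alpha^{2})\sum_{r\ge2}\alpha^{2(r-1)}.
\]
The tail sum equals \(\alpha^{2}\). Both terms tend to \(0\) as \(\alpha\to0\), so \(\|u_\lambda-\phi\|^2=O(\alpha^2)=O(\lambda^{-2})\).

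There is no real obstacle here. The only point needing care is that an eigenfunction is determined only up to a phase, so the limit statement must be made for a consistently chosen representative. I will normalize so that \(u_\lambda(1)>0\), which is permitted because the eigenspace is one-dimensional by Theorem~\ref{maintheorem}(i). Localization at \(r=\pm1\) is then immediate from the limit: the mass outside \(\{\pm1\}\) equals \((1-\alpha^2)\alpha^2/(1-\alpha^2)\cdot 2/2=\alpha^2\to0\).
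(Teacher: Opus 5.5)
Your proposal is correct and takes the same route as the paper, which reads the asymptotics directly off the explicit formulas for $E(k,\lambda)$, $\alpha$ and the eigenfunction in Theorem~\ref{maintheorem}(i); you additionally carry out the norm computation that the paper leaves implicit, using the normalization $u_\lambda(1)>0$ that matches the paper's remark. One small wording fix: since $\sum_{r\ge2}\alpha^{2(r-1)}=\alpha^{2}/(1-\alpha^{2})$, it is the whole second term $(1-\alpha^{2})\sum_{r\ge2}\alpha^{2(r-1)}$ that equals $\alpha^{2}$, and this does not affect your conclusion.
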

\begin{proof}
The assertions follow immediately from the explicit formulas for $E(k,\lambda),$ $\alpha,$ and the eigenfunction given in Theorem \ref{maintheorem}.
\end{proof}

\begin{remark}
The exceptional case \(k=\pi\) is obtained as the limit of the formulas derived for
\(k\in(-\pi,\pi)\). Let \(\lambda<0\) be fixed. Then, for
\(k\in(-\pi,\pi)\) sufficiently close to \(\pi\), the condition
\[
|\lambda|>2\cos\frac{k}{2}
\]
is satisfied, and
\[
\alpha(k)=-\frac{2\cos\frac{k}{2}}{\lambda}\to0
\qquad \text{as } \quad k\to\pi .
\]
Consequently,
\[
E(k,\lambda)
=
4+\lambda+\frac{4\cos^2{\frac{k}{2}}}{\lambda}
\to
4+\lambda
=
E(\pi,\lambda).
\]
The corresponding normalized eigenfunction is
\[
u_k(r)
=
\left(\frac{1-\alpha(k)^2}{2}\right)^{1/2}
\alpha(k)^{|r|-1}\operatorname{sgn}(r),
\qquad r\in\mathbb{Z}.
\]
Then
\[
u_k(r)\to \frac1{\sqrt2}(\delta_1-\delta_{-1})
\qquad \text{in } \,\, \ell^{2,\mathrm{odd}}(\mathbb{Z}).
\]
\end{remark}

\subsection{Threshold resonance.}
Let
\[\ell^{\infty}(\mathbb{Z})=\{u:\mathbb{Z}\to\mathbb{C}:\sup_{r\in\mathbb{Z}}|u(r)|<\infty\}\]
be the Banach space of bounded complex-valued functions on \(\mathbb{Z}\).
We also set
\[\ell^{\infty,\mathrm{odd}}(\mathbb{Z})=\{u\in\ell^{\infty}(\mathbb{Z}):u(-r)=-u(r),\ r\in\mathbb{Z}\}.
\]
Thus \(\ell^{\infty,\mathrm{odd}}(\mathbb{Z})\) is the subspace of bounded odd
functions on \(\mathbb Z\).

We introduce the notation
\[
E_{\mathrm{thr}}^{-}(k):=\varepsilon_{\min}(k),
\qquad
E_{\mathrm{thr}}^{+}(k):=\varepsilon_{\max}(k),
\]
for the lower and upper thresholds of the essential spectrum.
\begin{definition}
Let \(k\in(-\pi,\pi)\) and \(\lambda\neq0\). We say that
\(H_{\lambda}(k)\) has a resonance at a threshold
\(E_{\mathrm{thr}}^{\pm}(k)\)
if the equation
\[
H_{\lambda}(k)u=E_{\mathrm{thr}}^{\pm}(k)u
\]
has a nonzero solution
\[u\in\ell^{\infty,\mathrm{odd}}(\mathbb{Z})
\setminus
\ell^{2,\mathrm{odd}}(\mathbb{Z}).
\]
Such a solution is called a resonant solution.
\end{definition}
The following theorem characterizes the threshold resonances of $H_\lambda(k).$
\begin{theorem}
Let \(k\in(-\pi,\pi)\) and
\(\lambda=\pm2\cos\frac{k}{2}.\)
Then \(H_{\lambda}(k)\) has a resonance at the threshold
\(E_{\mathrm{thr}}^{\pm}(k).\)
A corresponding resonant solution is given by
\begin{equation}\label{resfunctionff}u_{\mathrm{res}}(r)=\begin{cases}
C\,(-1)^{|r|-1}\operatorname{sgn}(r), & \lambda=2\cos\frac{k}{2}\\
C\,\operatorname{sgn}(r), & \lambda=-2\cos\frac{k}{2},
\end{cases}
\qquad C\neq 0,
\end{equation}
and belongs to
\(
\ell^{\infty,\mathrm{odd}}(\mathbb Z)
\setminus
\ell^{2,\mathrm{odd}}(\mathbb Z).
\)
\end{theorem}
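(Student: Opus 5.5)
The plan is to verify the claimed solutions directly, using the same reduction to the half-line that was used in the proof of Theorem~\ref{maintheorem}. The reduction from the full-line equation $H_{\lambda}(k)u=Eu$ to the boundary value problem \eqref{eigensystem1} with $u(0)=0$ used only the pointwise form of the operator and the oddness $u(-r)=-u(r)$. Square-summability entered only later, when selecting the decaying root. Hence for any odd $u$ (bounded or not), the equation $H_{\lambda}(k)u=Eu$ on $\mathbb Z$ is equivalent to \eqref{eigensystem1} on $r\ge1$ together with $u(0)=0$. By oddness, the equation at $r\le -1$ is the mirror of the one at $-r$, and at $r=0$ both sides vanish. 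So I will write $a=a(k)=2\cos\frac k2>0$ and check \eqref{eigensystem1} for the restriction of \eqref{resfunctionff} to $r\ge1$.

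\emph{Case $\lambda=-a$, $E=E_{\mathrm{thr}}^{-}(k)=4-2a$.} Here $u(r)=C$ for $r\ge1$ and $u(0)=0$. For $r\ge2$ the bulk equation gives $4C-a(C+C)=(4-2a)C$. At $r=1$ we get $4C-aC+\lambda C=(4-2a)C$ because $\lambda=-a$.

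\emph{Case $\lambda=a$, $E=E_{\mathrm{thr}}^{+}(k)=4+2a$.} Here $u(r)=C(-1)^{r-1}$ for $r\ge1$. For $r\ge2$, $u(r\pm1)=-u(r)$, so the right-hand side is $4u(r)+2a\,u(r)$. At $r=1$ we have $u(2)=-C$ and $u(0)=0$, giving $4C+aC+aC=(4+2a)C$.

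In both cases the odd extension $u_{\mathrm{res}}(r)=u(|r|)\operatorname{sgn}(r)$ is exactly \eqref{resfunctionff}, so $u_{\mathrm{res}}$ solves the threshold equation on all of $\mathbb Z$. These are simply the formal limits $\alpha\to\mp1$ of the eigenfunction \eqref{efunctionff}, since $\alpha=-a/\lambda$; I will point this out as motivation.

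It remains to check membership in $\ell^{\infty,\mathrm{odd}}(\mathbb Z)\setminus\ell^{2,\mathrm{odd}}(\mathbb Z)$. Oddness holds by construction. We have $|u_{\mathrm{res}}(r)|=|C|$ for $r\neq0$, so $\sup_r|u_{\mathrm{res}}(r)|=|C|<\infty$, while $\sum_r|u_{\mathrm{res}}(r)|^2=\infty$. Since $C\neq0$ the solution is nonzero, so by the definition a threshold resonance occurs.

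The only subtle point, and the step I expect to need the most care, is the claim in the surrounding text that the threshold is a resonance \emph{rather than} an eigenvalue. To justify it I would add that at $E=4\mp2a$ the characteristic equation \eqref{quadraticeq} has the double root $\beta=\pm1$. Every solution of the bulk recursion on $r\ge1$ is then $(C_1+C_2r)(\pm1)^r$, none of which is in $\ell^2(\mathbb N)$ unless it vanishes identically. Consequently $E_{\mathrm{thr}}^{\pm}(k)$ is never an eigenvalue, and the bounded solution is unique up to a constant: boundedness forces $C_2=0$.
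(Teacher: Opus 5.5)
Your proof is correct and follows essentially the paper's route. The paper evaluates $\alpha=-2\cos\frac{k}{2}/\lambda=\mp1$ at $\lambda=\pm2\cos\frac{k}{2}$, substitutes this into the eigenvalue and eigenfunction formulas of Theorem~\ref{maintheorem}, and then notes boundedness and non-square-summability. You instead verify the half-line system \eqref{eigensystem1} directly, which is the check the paper leaves implicit, since those formulas were derived under $0<|\alpha|<1$. Your additional double-root argument (bulk solutions $(C_1+C_2r)(\pm1)^r$) goes beyond what the paper proves here and is correct: it shows $E_{\mathrm{thr}}^{\pm}(k)$ is never an eigenvalue and the bounded solution is unique up to scaling.
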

\begin{proof} We prove the theorem only for
\(\lambda=-2\cos\frac{k}{2},\) since the case \(\lambda=2\cos\frac{k}{2}\)
is identical, with \(\alpha=-1\) replacing \(\alpha=1\).

Let \(k\in(-\pi,\pi)\) and \(\lambda=-2\cos\frac{k}{2}=-a(k).\)
By the Theorem \ref{maintheorem}, the relation
\[
\lambda=-a(k)\alpha^{-1},
\]
implies that
\(
\alpha=1.
\)
Therefore,
by \eqref{eigenformula},
\[
E\left(k,-2\cos\frac{k}{2}\right)=4-4\cos\frac{k}{2}=E_{\mathrm{thr}}^{-}(k).
\]
Moreover, it follows \eqref{efunctionff} that
\[
u(r)=C\operatorname{sgn}(r), \quad C\neq0.
\]
Since
\(
u\in\ell^{\infty,\mathrm{odd}}(\mathbb Z)\setminus\ell^{2,\mathrm{odd}}(\mathbb Z),\)
and satisfies the equation
\[
H_\lambda(k)u=E_{\mathrm{thr}}^{-}(k)u,
\]
the operator \(H_\lambda(k)\) has a resonance at the lower threshold of its
essential spectrum.
\end{proof}
\begin{corollary}[\bf Threshold asymptotics]
Let \(k\in(-\pi,\pi)\). As
\(
\lambda\rightarrow\pm2\cos\frac{k}{2},
\)
from the eigenvalue regime, the discrete eigenvalue satisfies
\[E(k,\lambda)\longrightarrow E_{\mathrm{thr}}^{\pm}(k),\]
and
\[
\bigl|E(k,\lambda)-E_{\mathrm{thr}}^{\pm}(k)\bigr|
=
\frac{\bigl(|\lambda|-2\cos\frac{k}{2}\bigr)^2}{|\lambda|}
=
O\!\left(
\bigl(|\lambda|-2\cos\tfrac{k}{2}\bigr)^2
\right).
\]
Furthermore, choosing \(C=1\) in the representations of the eigenfunction \eqref{efunctionff}
and the resonant solution \eqref{resfunctionff},
\[
u(r)\longrightarrow u_{\mathrm{res}}(r),
\qquad r\in\mathbb Z,
\]
pointwise on $\mathbb Z.$
\end{corollary}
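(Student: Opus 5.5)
The plan is a direct computation from the explicit formula \eqref{eigenformula} of Theorem~\ref{maintheorem}. Throughout, write \(a=a(k)=2\cos\frac{k}{2}>0\). The eigenvalue regime is \(|\lambda|>a\), so \(\lambda\to\pm a\) is understood as \(\lambda\to a^{+}\) or \(\lambda\to -a^{-}\).

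First, I will put the eigenvalue into completed-square form:
\[
E(k,\lambda)=4+\lambda+\frac{a^2}{\lambda}=4+\frac{\lambda^2+a^2}{\lambda}.
\]
From this, the distance to each threshold is
\[
E(k,\lambda)-(4+2a)=\frac{(\lambda-a)^2}{\lambda},\qquad
E(k,\lambda)-(4-2a)=\frac{(\lambda+a)^2}{\lambda}.
\]
The two signs are handled separately.
\begin{itemize}
\item For \(\lambda>a\) we have \(\lambda=|\lambda|\) and \(\lambda-a=|\lambda|-a\). Hence \(E-E^{+}_{\mathrm{thr}}=(|\lambda|-a)^2/|\lambda|>0\).
\item For \(\lambda<-a\) we have \((\lambda+a)^2=(|\lambda|-a)^2\) and \(\lambda=-|\lambda|\). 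Hence \(E-E^{-}_{\mathrm{thr}}=-(|\lambda|-a)^2/|\lambda|\).
\end{itemize}
In both cases the absolute value equals \((|\lambda|-a)^2/|\lambda|\). Since \(|\lambda|>a>0\), this gives the bound \(\le (|\lambda|-a)^2/a\), which is the stated \(O\) estimate. It also gives convergence \(E\to E^{\pm}_{\mathrm{thr}}(k)\).

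Next comes the eigenfunction. With \(C=1\), formula \eqref{efunctionff} reads \(u(r)=\alpha^{|r|-1}\operatorname{sgn}(r)\) with \(\alpha=-a/\lambda\). As \(\lambda\to a^{+}\) we get \(\alpha\to-1\), and as \(\lambda\to -a^{-}\) we get \(\alpha\to 1\). For each fixed \(r\), the map \(\alpha\mapsto\alpha^{|r|-1}\) is a fixed polynomial, hence continuous; for \(r=0\) both sides vanish. Therefore \(u(r)\to(-1)^{|r|-1}\operatorname{sgn}(r)\) in the first case and \(u(r)\to\operatorname{sgn}(r)\) in the second. These are exactly the two branches of \eqref{resfunctionff} with \(C=1\), matched to \(\lambda=2\cos\frac{k}{2}\) and \(\lambda=-2\cos\frac{k}{2}\) respectively.

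There is no real obstacle here. The only point needing care is the sign bookkeeping: pairing \(\lambda\to+a\) with the upper threshold and \(\alpha\to-1\), and \(\lambda\to -a\) with the lower threshold and \(\alpha\to1\). It is also worth noting that the convergence is only pointwise and not in \(\ell^2\), since the limit is not square summable.
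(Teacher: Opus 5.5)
Your proposal is correct and follows the paper's own route: read off the threshold distance from the explicit formula \eqref{eigenformula}, and obtain the pointwise limit of the eigenfunction from $\alpha=-2\cos\frac{k}{2}/\lambda\to\mp1$ in \eqref{efunctionff}. Your version spells out the identities $E-E^{\pm}_{\mathrm{thr}}(k)=(\lambda\mp a)^2/\lambda$ and the pairing of each sign of $\lambda$ with its threshold and limiting function, which the paper leaves implicit.
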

\begin{proof}
The asymptotic formula follows from \eqref{eigenformula}. Moreover, choosing $C=1$ in \eqref{efunctionff} and \eqref{resfunctionff}, the pointwise convergence $u(r)\rightarrow u_{\mathrm{res}}(r)$ follows directly from \eqref{alpharelation}.
\end{proof}

\end{document}